\newtheorem{theoremsection}{Theorem}[section]
\newtheorem{theorem}{Theorem}
\newtheorem{corollary}[theorem]{Corollary}
\newtheorem{lemma}[theoremsection]{Lemma}
\newtheorem{definition}{Definition}[section]
\newtheorem{remark}{Remark}[section]
\def\QEDopen{{\setlength{\fboxsep}{0pt}\setlength{\fboxrule}{0.2pt}\fbox{\rule[0pt]{0pt}{1.3ex}\rule[0pt]{1.3ex}{0pt}}}}
\def\QED{\QEDopen}
\def\Q.E.D{\hfill\QED}
\date{}
\begin{document}
\title{The Separation of $\mathcal{NP}$ and $\mathcal{PSPACE}$}
\maketitle
\thanks{}
\author{\centerline {Tianrong Lin}}
\maketitle

\section *\small{{\centerline{Abstract} }
\medskip\noindent

\small{

There is an important and interesting open question in computational complexity on the relation between the complexity classes $\mathcal{NP}$ and $\mathcal{PSPACE}$. It is a widespread belief that $\mathcal{NP}\ne\mathcal{PSPACE}$. In this paper, we confirm this conjecture affirmatively by showing that there is a language $L_d$ accepted by no polynomial-time nondeterministic Turing machines but accepted by a nondeterministic Turing machine running within space $O(n^k)$ for all $k\in\mathbb{N}_1$. We achieve this by virtue of the prerequisite of 
$$
{\rm NTIME}[S(n)]\subseteq{\rm DSPACE}S(n)],
$$
and then by diagonalization against all polynomial-time nondeterministic Turing machines via a universal nondeterministic Turing machine $M_0$. We further show that $L_d\in \mathcal{PSPACE}$, which leads to the conclusion 
$$
\mathcal{NP}\subsetneqq\mathcal{PSPACE}.
$$

Our approach is based on standard diagonalization and novel new techniques developed in the author's recent works \cite{Lin21a,Lin21b} with some new refinement.
 
}

\section *{\centerline{  Table of Contents} }

\medskip\noindent
1. Introduction

\medskip\noindent
2. Preliminaries

\medskip\noindent
3. Enumeration of Polynomial-Time Nondeterministic Turing Machines

\medskip\noindent
4. $\mathcal{NP}$ Differs from $\mathcal{PSPACE}$

\medskip\noindent
5. Conclusions

\medskip\noindent
 References

\section{Introduction}
\label{sec:introduction}
\vskip 0.3cm

{\it Computational complexity theory} has developed rapidly in the past $40$ years. {\em Space complexity}, one of two main important metrics to evaluate in a computation, was introduced by Stearns et al. \cite{SHL65} in 1965. It is one of the most important and most widespread ways of measuring the complexity of a computation \cite{Pap94, AB09, Sip13}, while the other is the {\em time complexity}. In addition, in the area of {\em computational complexity theory}, people introduced many abstract models of computation, the most widespread of which was the Turing machine, but many of these models gave rise to essentially equivalent formalizations of such fundamental notions as time or space complexity. For more details on space complexity, we refer the reader to the excellent survey paper by Michel \cite{Mic92}.
  
In {\it computational complexity theory}, $\mathcal{NP}$ ({\em nondeterministic polynomial time}) is a complexity class used to classify decision problems (see e.g. \cite{ANMOUS2}), i.e., the set of decision problems for which the problem instances, where the answer is ``yes," have proofs verifiable in polynomial time by a deterministic Turing machine, or alternatively the set of problems that can be solved in polynomial time by a nondeterministic Turing machine. In other words, an equivalent definition of $\mathcal{NP}$ is the set of decision problems verifiable in polynomial time by a deterministic Turing machine \cite{Kar72}. The reason these two definitions are equivalent is because the algorithm based on the Turing machine consists of two phases, the first of which consists of a guess about the solution, which is generated in a nondeterministic way, while the second phase consists of a deterministic algorithm that verifies whether the guess is a solution to the problem, see e.g. \cite{ANMOUS2}. 

Equally importantly, the complexity class $\mathcal{PSPACE}$ is the set of all decision problems that can be solved by a Turing machine using a {\em polynomial amount of space} (see e.g. \cite{ANMOUS3}). There are many other complexity classes, such as $L$,\footnote{ The notation $L$ will represent the complexity class ${\rm DSPACE}[\log n]$ in this paper unless otherwise stated.} $NL$, and $\mathcal{P}$, except for $\mathcal{NP}$ and $\mathcal{PSPACE}$, appearing in the standard textbooks of computational complexity; for example, see \cite{Pap94,AB09,Sip13}. For these complexity classes $L$, $NL$, $\mathcal{P}$, $\mathcal{NP}$, and $\mathcal{PSPACE}$, there is a well-known tower of inclusions stated below
$$
L\subseteq NL\subseteq \mathcal{P}\subseteq \mathcal{NP}\subseteq \mathcal{PSPACE}.
$$
As observed by Cook \cite{Coo00}, a simple diagonal argument shows that the first is a proper subset of the last, but we cannot prove any particular adjacent inclusion is proper. To the best of our knowledge, it is also an important open question whether $\mathcal{NP}\overset{?}{=}\mathcal{PSPACE}$ in computational complexity theory, see e.g. \cite{Pap94,AB09,Sip13}, also \cite{ANMOUS1}.

Although all problems solvable in polynomial time can be solved in polynomial space, it is still an unresolved question whether there exist problems solvable in polynomial space that cannot be solved in polynomial time \cite{GJ79}. We here focus on an even stronger question of {\em whether there exist problems solvable in polynomial space that cannot be solved in nondeterministic polynomial time}. We say the later question is even stronger than the former question because an affirmative answer to the later question implies a positive answer to the former question. In other words, we study the relationship between $\mathcal{NP}$ and $\mathcal{PSPACE}$. Of course, one will ask, is $\mathcal{NP}$ equal to $\mathcal{PSPACE}$? In regard to this question, there are some previous works that try to answer it. For example, in 1981, Book argued in \cite{Boo81} that 
$$
\mathcal{NP} = \mathcal{PSPACE}
$$
if and only if $NP(A)$ is equal to the class $NPQUERY(A)$ of languages accepted by nondeterministic polynomial-space-bounded oracle machines that can query the oracle for $A$ only a polynomial number of times. 

In fact, there are widespread briefs that 
$$
\mathcal{NP}\neq \mathcal{PSPACE}, 
$$
but no proof is available in contemporary computational complexity textbooks such as \cite{Pap94,AB09, Sip13}. In this paper, we resolve this question. The reader will note that the essential techniques used here are basically the same as those in the author's recent works \cite{Lin21a, Lin21b} but with some refinement. 

Indeed, in what follows, we first enumerate all of the polynomial-time nondeterministic Turing machines and then diagonalize against all of them, following with a careful analysis. The standard diagonalization technique allows us to construct a language that differs from any language in $\mathcal{NP}$, which is similar to {\em time hierarchies} \cite{HS65,Coo73,Pap94,FS07, AB09, Sip13}, i.e., allowing us to produce a language at least not in a specific complexity class. The reader is referred to \cite{Tur37} or the survey article \cite{For00} for more detailed information about the standard diagonalization technique. 

In the following context, let us prove step by step the relationship:
$$
\mathcal{NP}\subsetneqq\mathcal{PSPACE}
$$
between the complexity classes $\mathcal{NP}$ and $\mathcal{PSPACE}$.

\begin{theorem}
\label{theorem1}
   $\mathcal{NP}\neq \mathcal{PSPACE}$. In other words, $\mathcal{NP}\subsetneqq \mathcal{PSPACE}$.
\end{theorem}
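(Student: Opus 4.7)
The plan is to build a single language $L_d$ that simultaneously evades every polynomial-time nondeterministic Turing machine, yet can itself be decided in polynomial space, so that $L_d\in\mathcal{PSPACE}\setminus\mathcal{NP}$. Concretely, I would first fix an effective enumeration $M_1,M_2,\ldots$ of all nondeterministic Turing machines together with a standard pairing scheme so that each pair $(i,k)$ picks out the candidate ``$M_i$ runs within time $n^k+k$'' — this covers every polynomial-time NTM at least once, since any such machine has some index $i$ and some polynomial bound $n^k+k$. The diagonal target will assign to each $(i,k)$ a dedicated witness string $w_{i,k}$ (chosen long enough to absorb the simulation overhead) such that $w_{i,k}\in L_d$ if and only if $M_i$ does \emph{not} accept $w_{i,k}$ within $n^k+k$ nondeterministic steps.

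Next I would build the universal nondeterministic machine $M_0$ that realises this decision. On input $w_{i,k}$, it parses out $i$ and $k$, simulates $M_i$ on $w_{i,k}$ with a step counter bounded by $n^k+k$, and flips the answer. To keep the space under control I would invoke the stated premise $\mathrm{NTIME}[S(n)]\subseteq\mathrm{DSPACE}[S(n)]$: the bounded nondeterministic simulation of $M_i$ fits in deterministic space $O(n^k)$, and after adding the book-keeping tape (counter, index decoding, negation of the outcome), the total space stays $O(n^{k'})$ for a slightly larger $k'$ depending on $i,k$. Since every $k$ is eventually reached, $M_0$ runs in space $O(n^k)$ for every $k\in\mathbb{N}_1$, and hence $L_d\in\mathcal{PSPACE}$.

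To finish the non-membership $L_d\notin\mathcal{NP}$, I would argue by contradiction. Suppose some polynomial-time NTM $N$ accepts $L_d$ in time $n^c$. Then $N$ appears in the enumeration as some $M_{i_0}$, and $c$ is covered by some $k_0\geq c$, so the diagonal witness $w_{i_0,k_0}$ is available. On this witness $M_{i_0}$ halts within its budget, so $L_d$ agrees with $\lnot M_{i_0}$ on $w_{i_0,k_0}$; but $M_{i_0}$ was assumed to decide $L_d$, giving the contradiction $w_{i_0,k_0}\in L_d \iff w_{i_0,k_0}\notin L_d$.

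The step I expect to be the genuine obstacle is the space-accounting of the universal simulator: I must show that a single machine $M_0$ can, for \emph{every} polynomial $n^k$, simulate an arbitrary $M_i$ on its diagonal witness while staying inside $O(n^k)$ space, despite the unavoidable slowdown inherent in universal simulation. This is where the premise $\mathrm{NTIME}[S(n)]\subseteq\mathrm{DSPACE}[S(n)]$ has to be applied with care, because standard relativization-style obstructions warn against glib polynomial-time diagonalization; the whole argument rests on the fact that we are budgeting \emph{space} rather than time, so the simulation overhead that would kill a time-diagonalization is harmless here. Careful index-encoding and witness-padding are the technical devices I would rely on to make the counting go through.
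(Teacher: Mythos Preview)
Your plan is essentially the paper's own strategy: enumerate pairs $(M,k)$ of nondeterministic machines together with claimed polynomial exponents, invoke $\mathrm{NTIME}[S]\subseteq\mathrm{DSPACE}[S]$ to convert the time-bounded simulation into a space-bounded one, have a universal machine $M_0$ flip the simulated answer, and then argue that $M_0$ uses only polynomial space. The paper treats the input string itself as the encoding of $(M,k)$ rather than reserving separate witnesses $w_{i,k}$, and it routes the final step through $\mathcal{NSPACE}$ and Savitch's theorem, but these are cosmetic differences; the skeleton is the same.

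The step you yourself flag as ``the genuine obstacle'' is a real gap, and your proposal does not close it. The inference ``$M_0$ runs in space $O(n^k)$ for every $k\in\mathbb{N}_1$, and hence $L_d\in\mathcal{PSPACE}$'' is invalid: membership in $\mathcal{PSPACE}$ requires a \emph{single fixed} exponent $c$ such that $M_0$ runs in space $O(n^c)$ on \emph{all} inputs. In your construction, an input encoding the pair $(i,k)$ forces $M_0$ to allocate roughly $n^{k}$ cells, and $k$ is unbounded as the input varies, so no fixed polynomial bounds $M_0$'s space across all inputs. Your intuition that ``budgeting space rather than time'' neutralizes the usual obstruction is exactly where the argument fails: in a genuine space-hierarchy diagonalization the diagonalizer has one fixed space bound strictly dominating each target, whereas here the targets carry arbitrarily large polynomial exponents, so no fixed polynomial-space machine can dominate them all. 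The paper's attempted patch --- writing $L_d=\bigcup_{i} L_d^i$ with each $L_d^i\in\mathrm{NSPACE}[n^i]$ and concluding $L_d\in\mathcal{NSPACE}$ --- does not rescue the argument either, since an increasing union of $\mathcal{PSPACE}$ languages need not lie in $\mathcal{PSPACE}$ (every language whatsoever is such a union), so neither your sketch nor the paper's route actually establishes $L_d\in\mathcal{PSPACE}$.
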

  
The class $\mathcal{IP}$ of languages that have efficient interactive proofs of membership was introduced by Goldwasser et al. \cite{GMR85}. A language $L\in\mathcal{IP}$ if a probabilistic polynomial-time verifier $V$ can be convinced by some prover $P$ to accept any $x\in L$ with overwhelming probability but cannot be convinced by any prover $P'$ to accept any $x\not\in L$ with a nonnegligible probability. A major result of complexity theory is that $\mathcal{PSPACE}$ can be characterized as all the languages recognizable by a particular interactive proof system (see e.g. \cite{ANMOUS2}), the one defining the class $\mathcal{IP}$. In this system, there is an all-powerful prover trying to convince a randomized polynomial-time verifier that a string is in the language. It should be able to convince the verifier with high probability if the string is in the language but should not be able to convince it except with low probability if the string is not in the language. More precisely, the above characterization shown by Shamir \cite{Sha92} is as follows:
$$
\mathcal{IP}=\mathcal{PSPACE}.
$$
  
Hence, an immediate corollary from Theorem \ref{theorem1} and Shamir's main result in \cite{Sha92} is as follows:
\begin{corollary}
\label{corollary1}
$$
\mathcal{NP}\subsetneqq\mathcal{IP}.
$$
\end{corollary}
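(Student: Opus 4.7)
The plan for Corollary~\ref{corollary1} is a one-line deduction that combines two external inputs: Theorem~\ref{theorem1}, proved in this paper, together with Shamir's characterization $\mathcal{IP}=\mathcal{PSPACE}$ from \cite{Sha92}. First I would record Shamir's identity as an honest set-theoretic equality of language classes, treating $\mathcal{IP}$ and $\mathcal{PSPACE}$ as literally the same collection of languages. Second I would invoke Theorem~\ref{theorem1}, which furnishes the strict containment $\mathcal{NP}\subset\mathcal{PSPACE}$.

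Combining the two, I would substitute $\mathcal{IP}$ for $\mathcal{PSPACE}$ on the right-hand side of the strict containment produced by Theorem~\ref{theorem1}, obtaining $\mathcal{NP}\subset\mathcal{IP}$, which in particular yields $\mathcal{NP}\neq\mathcal{IP}$, as required. No intermediate lemma, no padding argument, and no additional diagonalization is needed once both inputs are in hand: the fact that two \emph{different} characterizations of $\mathcal{PSPACE}$ --- polynomial space versus interactive proofs --- coincide exactly allows any separation proved against $\mathcal{PSPACE}$ to be transported verbatim to $\mathcal{IP}$.

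The main obstacle is therefore not in the corollary itself but in its hypotheses. Theorem~\ref{theorem1} is the nontrivial ingredient, and its proof --- building a language $L_d$ by diagonalizing against all polynomial-time nondeterministic Turing machines through a universal NTM $M_0$ that operates in $O(n^k)$ space for every $k\in\mathbb{N}_1$ --- is where all the real work lives. Shamir's characterization is an external black box whose correctness I would simply cite. Given both, the corollary is immediate, and the only care required is to observe that the equality $\mathcal{IP}=\mathcal{PSPACE}$ is a genuine two-sided set equality of classes, so substituting one for the other preserves strict inequality rather than merely a one-way containment.
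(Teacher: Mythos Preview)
Your proposal is correct and mirrors the paper exactly: the paper states the corollary as ``an immediate corollary from the Theorem~\ref{theorem1} and the Shamir's main result in \cite{Sha92}'' without further argument, and your substitution of $\mathcal{IP}$ for $\mathcal{PSPACE}$ via Shamir's equality is precisely that immediate deduction.
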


\vskip 0.3cm
\subsection{Organization}
\vskip 0.3cm

The rest of this work is organized as follows: for the convenience of the reader, in the next Section \ref{sec:preliminaries}, we will recall some notions and notation closely associated with our discussion appearing in Section \ref{sec:introduction}. In Section \ref{sec:proofofmainresult}, we will prove our main result, i.e., Theorem \ref{theorem1}. Finally, a brief conclusion is drawn in the last section \ref{sec:conclusions}, in which possible future study is also pointed out.

\vskip 0.3cm
\section{Preliminaries}
\label{sec:preliminaries}
\vskip 0.3cm
In this section, we describe some notation and notions that are closely associated with our discussion.

For a set $S$, we let $|S|$ denote the cardinality of $S$.

Let $\mathbb{N}$ denote the natural numbers set
$$
\{0,1,2,3,\cdots\}
$$
where $+\infty\not\in\mathbb{N}$. Further, $\mathbb{N}_1$ denotes the set of 
$$
\mathbb{N}-\{0\}.
$$
It is clear that there is a bijection between $\mathbb{N}$ and $\mathbb{N}_1$. To see this, just let the bijection be 
$$
n \mapsto n+1
$$
where $n\in\mathbb{N}$ and $n+1\in\mathbb{N}_1$.

Let $\Sigma$ be an alphabet; for finite words $w\in\Sigma^*$, the length of $w$, denoted as $|w|$, is defined to be the number of symbols in it. For a finite word $w\in\Sigma^*$ and positive integer $1\leq j\leq |w|$, $w_j$ denotes the $j$th symbol in $w$. For example, let $w=abcdeaa\in\Sigma^*$, then $w_1=a$, $w_2=b$, $w_3=c$, and so on.

The big $O$ notation indicates the order of growth of some quantity as a function of $n$ or the limiting behavior of a function. For example, that $S(n)$ is big $O$ of $f(n)$, i.e.,
$$
S(n)=O(f(n))
$$
means that there exists a positive integer $N_0$ and a positive constant $M$ such that
$$
S(n)\leq M\times f(n)
$$
for all $n>N_0$.

Throughout this paper, the computational modes used are {\it nondeterministic Turing machines}. We follow the standard definition of a nondeterministic Turing machine given in the standard textbook \cite{AHU74}. Let us first introduce the precise definition of a nondeterministic Turing machine as follows:

\begin{definition}[$k$-tape nondeterministic Turing machine, \cite{AHU74}]
\label{definition2.1}
A $k$-tape nondeterministic Turing machine (shortly, NTM) $M$ is a seven-tuple $(Q,T,I,\delta,\mathbbm{b},q_0,q_f)$
where:
\begin{enumerate}
\item {$Q$ is the set of states.}
\item {$T$ is the set of tape symbols.}
\item {$I$ is the set of input symbols; $I\subseteq T$.}
\item {$\mathbbm{b}\in T-I$, is the blank.}
\item {$q_0$ is the initial state.}
\item {$q_f$ is the final (or accepting) state.}
\item {$\delta$ is the next-move function, or a mapping from $Q\times T^k$ to subsets of 
$$
Q\times(T\times\{L,R,S\})^k.
$$
Suppose
$$
  \delta(q,a_1,a_2,\cdots,a_k)=\{(q_1,(a^1_1,d^1_1),(a^1_2,d^1_2),\cdots,(a^1_k,d^1_k)),\cdots,(q_n,(a^n_1,d^n_1),(a^n_2,d^n_2),\cdots,(a^n_k,d^n_k))\}
$$
and the nondeterministic Turing machine is in state $q$ with the $i$th tape head scanning tape symbol $a_i$ for $1\leq i\leq k$. Then in one move the nondeterministic Turing machine enters state $q_j$, changes symbol $a_i$ to $a^j_i$, and moves the $i$th tape head in the direction $d^j_i$ for $1\leq i\leq k$ and $1\le j\le n$.}
\end{enumerate}
\end{definition}

If $|\delta(q,a_1,a_2,\cdots,a_k)|\leq 1$ for all $q$ and for all $a_i$, $1\leq i\leq k$, then $M$ is deterministic, and in this case, we call $M$ a deterministic Turing machine.

Let $M$ be a nondeterministic Turing machine, and $w$ be an input. Then $M(w)$ represents that $M$ is on input $w$. 

A nondeterministic Turing machine $M$ works in time $t(n)$ (or of time complexity $t(n)$), if for any input $w\in I^*$ where $I$ is the input alphabet of $M$, $M(w)$ will halt within $t(|w|)$ steps. We should make a formal definition of a polynomial-time nondeterministic Turing machine. In fact, the notion of a polynomial-time nondeterministic Turing machine can be defined in a similar way to the concept of a polynomial-time deterministic Turing machine \cite{Coo00}. In short, we have the following:

\begin{definition}[cf. polynomial-time deterministic Turing machines in \cite{Coo00}]
\label{definition2.2}
Formally, a polynomial-time nondeterministic Turing machine is a nondeterministic Turing machine such that there exists $k\in\mathbb{N}_1$, for all input $w$ of length $|w|$ where $|w|\in\mathbb{N}$, $M(w)$ will halt within $t(|w|)=|w|^k+k$ steps. The language accepted by polynomial-time nondeterministic Turing machine $M$ is denoted by $L(M)$.
\end{definition}

\vskip 0.3 cm
\begin{remark}
\label{remark2.1}
Obviously, in the above definition, for any $i\in\mathbb{N}$, given a polynomial-time nondeterministic Turing machine $(M, k)$, $M (x)$ will halt within $O(|x|^{k+i})$ steps for any input $x$ of length $|x|$. However, there exists at least an input $y$ of length $|y|$, and $M(y)$ does not halt within $O(|y|^{k-1})$ steps.
\end{remark}
\vskip 0.3 cm

By default, a word $w$ is accepted by a polynomial-time (say, $t(n)$ time-bounded) nondeterministic Turing machine $M$ if there exists at least one computation path $\gamma$ putting the machine into an accepting state (i.e., stopping in the ``accepting" state) on input $w$ (in such a situation, we call $M$ accepts $w$). Specifically, $M$ accepts the language $L(M)$, then $w\in L(M)$ if and only if $M$ accepts $w$. In other words, there exists at least one accepting path for $M$ on input $w$. 

Let $M$ be a Turing machine. For every input $x$ of length $n$, if all computations of $M$ on $x$ take less than $T(n)$ steps, then $M$ is said to be of {\em time complexity} $T(n)$ or a {\em deterministic/nondeterministic $T(n)$ time-bounded Turing machine}. The class of languages recognized by some deterministic (resp. nondeterministic) Turing machine of time complexity $T(n)$ is denoted by ${\rm DTIME} [T(n)]$ (resp. ${\rm NTIME}[T(n)]$. The notation $\mathcal{NP}$ is defined to be the class of languages:
$$
\mathcal{NP}=\bigcup_{k\in\mathbb{N}_1}{\rm NTIME}[n^k].
$$

Let $M$ be a Turing machine. For every input $x$ of length $n$, if every computation of $M$ on $x$ scans at most $S(n)$ tape cells of each of its work tapes, then $M$ is said to be {\em of space complexity} $S(n)$ or to be a {\em deterministic/nondeterministic $S(n)$ space-bounded Turing machine}. The class of languages recognized by the deterministic (resp. nondeterministic) Turing machine of space complexity $S(n)$ is denoted by ${\rm DSPACE}[S(n)]$ (resp. ${\rm NSPACE} [S(n)]$). 

The complexity classes such as ${\rm DSPACE}[S(n)]$, ${\rm NSPACE}[S(n)]$,  $L$, and $NL$ are called space complexity classes. In particular, we use the notation $L$ (resp. $NL$) to denote the class ${\rm DSPACE}[\log n]$ (resp. ${\rm NSPACE}[\log n]$). The classes of $\mathcal{PSPACE}$ and $\mathcal{NSPACE}$ are given by
$$
\mathcal{PSPACE}=\bigcup_{k\in\mathbb{N}_1}{\rm DSPACE}[n^k]
$$
and
$$
\mathcal{NSPACE}=\bigcup_{k\in\mathbb{N}_1}{\rm NSPACE}[n^k].
$$

By Savitch's Theorem \cite{Sav70}, we have 
$$
\mathcal{PSPACE}=\mathcal{NSPACE}.
$$

The following lemma is important and useful for our proof of the main result. For convenience, we quote it as follows:

\begin{lemma}[Theorem 7.4 (b) in \cite{Pap94}, p. 147]
\label{lemma1}
Suppose that $f(n)$ is a proper complexity function. Then ${\rm NTIME}[f(n)]\subseteq {\rm DSPACE}[f(n)]$.
\end{lemma}
\begin{proof}
Consider a language 
$$
L\in {\rm NTIME}[f(n)].
$$
There is a precise nondeterministic Turing machine $M_0$ that decides $L$ in time $f(n)$. We shall design a deterministic machine $M_1$ that decides $L$ in space $f(n)$. The deterministic machine $M_1$ generates a sequence of nondeterministic choices for $M_0$, that is, an $f(n)$-long sequence of integers between $0$ and $d-1$ (where $d$ is the maximum number of choices for any state-symbol combination of $M_0$). Then $M_1$ simulates the operation of $M_0$ with the given choices. This simulation can obviously be carried out in space $f(n)$ (in time $f(n)$, only $O(f(n))$ characters can be written!). However, there are exponentially many such simulations that must be tried, to check whether a sequence of choices that leads to acceptance exists. The point is that they can be carried out one-by-one, always erasing the previous simulation to reuse space. We only need to keep track of the sequence of choices currently simulated, and generate the next, but both tasks can easily be done within space $O(f(n))$. The fact that $f$ is proper can be used to generate the first sequence of choices, $0^{f(n)}$.
\end{proof}

Because our research objects are polynomial-time nondeterministic Turing machines, we will highlight two of their key properties:

\begin{lemma}[Lemma 10.1 in \cite{AHU74}]
\label{lemma3} 
If $L$ is accepted by a $k$-tape nondeterministic Turing machine of time complexity $T(n)$, then $L$ is accepted by a single-tape nondeterministic Turing machine of time complexity $O(T^2(n))$. \Q.E.D
\end{lemma}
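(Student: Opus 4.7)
The plan is to follow the classical multi-tape-to-single-tape simulation, taking care to preserve nondeterminism throughout. Let $M$ be the given $k$-tape NTM of time complexity $T(n)$ accepting $L$. I would construct a single-tape NTM $M'$ whose work tape is organized into $2k$ parallel tracks: for each $i\in\{1,\dots,k\}$, one track stores the contents of $M$'s $i$-th tape and a parallel track carries a single marker indicating the position of $M$'s $i$-th head. Formally, the tape alphabet of $M'$ is (a superset of) $\Gamma^{2k}$ where $\Gamma$ is $M$'s tape alphabet augmented with a blank and a marker symbol.

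Initialization is routine: $M'$ first converts its input, originally written in one-symbol-per-cell form on track $1$, into the $2k$-track format by a single pass, placing head markers on all $k$ head-tracks in the starting cell. This costs $O(n)$ time. I would then describe the simulation of a single step of $M$. In one left-to-right sweep across the used portion of the tape, $M'$ locates each of the $k$ head markers and records the $k$ scanned symbols in its finite control. Having the full $k$-tuple of scanned symbols together with $M$'s current state, $M'$ nondeterministically guesses one transition of $M$, and in a right-to-left sweep writes the new symbols and shifts each of the $k$ markers by one cell.

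Since $M$ runs for at most $T(n)$ steps, the used portion of any tape of $M$ has length at most $T(n)$, so the used portion of $M'$'s tape has length at most $T(n)+O(1)$. Each sweep therefore takes $O(T(n))$ time, giving $O(T(n))$ time per simulated step of $M$. Summing over the $T(n)$ steps yields total time $O(T^2(n))$, as required. Acceptance is preserved exactly: the nondeterministic computation paths of $M'$ are in natural bijection with those of $M$, and $M'$ accepts iff some path of $M$ on $x$ accepts, so $L(M')=L(M)=L$.

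The main subtlety — rather than a serious obstacle — is the bookkeeping around the tape boundary: when a head of $M$ moves one cell to the right of the previously visited region, $M'$ must extend the marked region by one cell and initialize the new cell's $2k$-tuple to blanks-with-no-markers, which costs only $O(1)$ time per step. One should also verify that $M'$'s finite control depends only on $k$, $M$'s state set, and $M$'s tape alphabet (in particular not on $n$), which is immediate since at any moment $M'$ only needs to remember a bounded amount of information (the current phase of the sweep, $M$'s state, and the $k$-tuple of read symbols).
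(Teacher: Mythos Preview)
The paper does not give its own proof of this lemma: it simply quotes the statement from \cite{AHU74} and places a \textsc{q.e.d.} box after it. Your proposal is precisely the standard $2k$-track simulation argument that underlies Lemma~10.1 in \cite{AHU74}, and it is correct as written; there is nothing to compare since the paper defers entirely to the cited reference.
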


\begin{lemma}[Corollary 2 in \cite{AHU74}, p. 372]
\label{lemma4}
If $L$ is accepted by a $k$-tape nondeterministic Turing machine of space complexity $S(n)$, then $L$ is accepted by a single-tape nondeterministic Turing machine of space complexity $S(n)$, where $S(n)\geq n$.\Q.E.D
\end{lemma}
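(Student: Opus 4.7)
The plan is to carry out the standard multi-track simulation: build a single-tape nondeterministic Turing machine $M'$ whose tape alphabet is (an enlargement of) $\Gamma^k \times \{0,1\}^k$, where $\Gamma$ is $M$'s tape alphabet. Each cell of $M'$'s tape carries one symbol from each of $M$'s $k$ tapes as parallel tracks, together with $k$ marker bits that record, for each of $M$'s heads, whether that head currently lies over this position. This encoding makes the entire instantaneous description of $M$ (apart from its state) fit ``vertically'' into a single row of $M'$'s tape.

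To simulate one step of $M$, the machine $M'$ executes two sweeps across its currently nonblank region. The first sweep locates the $k$ head markers and records in the finite control the $k$-tuple of symbols that $M$ would be scanning; the second sweep applies a nondeterministic choice of $M$'s transition, writing the updated symbols and shifting the markers one cell left or right as $M$'s heads would move. $M'$ accepts precisely when $M$ would accept. Because $M$ scans at most $S(n)$ cells on each of its tapes, the $k$ tracks all fit within $S(n)$ cells of $M'$'s single tape; the constant blow-up in alphabet size is absorbed by the enlarged symbol set (with the Tape Compression Theorem of \cite{HSL65} available to mop up any leading constant if needed). Since $M'$ makes its nondeterministic choices exactly where $M$ does, the two machines accept the same language.

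The step I expect to demand the most care, rather than the simulation itself, is the bookkeeping for the input: when $S(n)$ is sublinear in $n$ one must adopt the usual input/work-tape separation so that the ``space $S(n)$'' bound is applied only to the work portion. Under that standard textbook convention (the setting in which the cited \lemref{lemma4} is stated), the two-sweep simulation runs within $S(n)$ cells and preserves the language, which is exactly what the lemma claims.
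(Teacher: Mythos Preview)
Your multi-track simulation is the standard, correct argument for this reduction, and your caveat about the sublinear-$S(n)$ case (separating the read-only input tape from the work tape) is the right technical footnote. There is nothing to compare against, however: in the paper the lemma is stated with a citation to \cite{AHU74} and closed with a \textsc{q.e.d.}\ symbol but no proof is given---it is quoted as a known textbook fact. What you have written is essentially the classical proof one would find at the cited reference, so your proposal is fine and, if anything, supplies more detail than the paper does.
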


Finally, the relationship between $\mathcal{NP}$ and $\mathcal{PSPACE}$ is to ask whether 
$$
\mathcal{NP}\overset{?}{=}\mathcal{PSPACE},
$$
i.e., whether the classes of languages $\mathcal{NP}$ and $\mathcal{PSPACE}$ are identical.

\vskip 0.3cm
\section{Enumeration of Polynomial-Time Nondeterministic Turing Machines}
\label{sec:enumeration}
\vskip 0.3cm

By Definition \ref{definition2.2}, a polynomial-time nondeterministic Turing machine can be represented by a tuple of $(M, k)$, where $M$ is the nondeterministic Turing machine itself and $k$ is the unique minimal degree of some polynomial $|x|^k+k$ such that $M(x)$ will halt within $|x|^k+k$ steps for any input $x$ of length $|x|$. We call such a $k$ the order of $(M,k)$.

To obtain our main result, we need to {\it enumerate} the polynomial-time nondeterministic Turing machines so that for each nonnegative integer $i$, there is a unique tuple of $(M,k)$ associated with $i$ (i.e., to define a function from $\mathbb{N}_1$ to the set of all polynomial-time nondeterministic Turing machines $\{(M,k)\}$ such that the function is surjective), such that we can refer to the $j$-th polynomial-time nondeterministic Turing machine.

By Lemma \ref{lemma3}, we can restrict ourselves to single-tape nondeterministic Turing machines. So, in the following context, by polynomial-time nondeterministic Turing machines, we mean single-tape polynomial-time nondeterministic Turing machines.

We first present the following important definition concerning ``enumeration" of a set $T$ which is enumerable: \footnote{ In Georg Cantor's terminology \cite{Can91}, enumeration of something is the ``sequence" of something.}

 \begin{definition}[\cite{Rud76}, p. 27, Definition 2.7]\footnote{ There exist enumerations for $T$ when and only when the set $T$ is enumerable. And the term ``{\em enumerable}", Turing refers to \cite{Hob21}, p. 78. That is, that the set $T$ is enumerable is the same as that $T$ is countable. See \cite{Tur37}, Section of {\em Enumeration of computable sequences}.}
 \label{definition4}
 By an enumeration of set $T$, we mean a function $e$ defined on the set $\mathbb{N}_1$ of all positive integers. If $e(n)=x_n\in T$, for $n\in\mathbb{N}_1$, it is customary to denote the enumeration $e$ by the symbol $\{x_n\}$, or sometimes by $x_1$, $x_2$, $x_3$, $\cdots$. The values of $e$, that is, the elements $x_n\in T$, are called the {\em terms} of the enumeration.
 \end{definition}

To show that the set of all polynomial-time nondeterministic Turing machines is enumerable (and to further present an enumeration of the set of all polynomial-time nondeterministic Turing machines), we first use the method presented in \cite{AHU74}, page $407$, to encode a single-tape nondeterministic Turing machine into an integer.

Without loss of generality, we can make the following assumptions about the representation of a single-tape nondeterministic Turing machine with input alphabet $\{0,1\}$ because that will be all we need: 

\begin{enumerate}
\item {The states are named 
$$
q_1,q_2,\cdots,q_s
$$
for some $s$, with $q_1$ the initial state and $q_s$ the accepting state.}
\item {The input alphabet is $\{0,1\}$.}
\item {The tape alphabet is 
$$
\{X_1,X_2,\cdots,X_t\}
$$
for some $t$, where $X_1=\mathbbm{b}$, $X_2=0$, and $X_3=1$.}
\item {The next-move function $\delta$ is a list of quintuples of the form, $$
    \{(q_i,X_j,q_k,X_l,D_m),\cdots,(q_i,X_j,q_f,X_p,D_m)\}
     $$
     meaning that 
     $$
     \delta(q_i,X_j)=\{(q_k,X_l,D_m),\cdots,(q_f,X_p,D_m)\},
      $$
      and $D_m$ is the direction, $L$, $R$, or $S$, if $m=1,2$, or $3$, respectively. We assume this quintuple is encoded by the string 
      $$
      10^i10^j10^k10^l10^m1\cdots 10^i10^j10^f10^p10^m1.
      $$
      }
\item {The nondeterministic Turing machine itself is encoded by concatenating in any order the codes for each of the quintuples in its next-move function. Additional $1$'s may be prefixed to the string if desired. The result will be some string of $0$'s and $1$'s, beginning with $1$, which we can interpret as an integer.}
\end{enumerate}

Next, we encode the order of $(M,k)$ to be 
    $$
    10^k1
    $$
so that the tuple $(M,k)$ should be the concatenation of the binary string representing $M$ itself followed by the order $10^k1$. Now the tuple $(M,k)$ is encoded as a binary string, which can be explained as an integer.

By this encoding, any integer that cannot be decoded into a polynomial-time nondeterministic Turing machine is assumed to represent the trivial Turing machine with an empty next-move function. Every single-tape polynomial-time nondeterministic Turing machine will appear infinitely often in the enumeration since, given a polynomial-time nondeterministic Turing machine, we may prefix $1$'s at will to find larger and larger integers representing the same set of $(M,k)$. We denote such a polynomial-time NTM by $\widehat{M}_j$, where $j$ is the integer representing the tuple $(M,k)$.
    
\vskip 0.3 cm
\begin{remark}
\label{remark3.1}
One of the conveniences of representing the polynomial-time nondeterministic Turing machines in this way (i.e., as a tuple $(M,k)$) is to conveniently control the running space of the universal nondeterministic Turing machine $M_0$ constructed in Theorem \ref{theorem3} in subsection \ref{sec:proofoftheorem1} below, so that it facilitates our analysis of the space complexity of $M_0$, i.e., to easily show the fact of Theorem \ref{theorem4}.
\end{remark}
\vskip 0.3 cm

In fact, we have presented a function $e$ from the set of all polynomial-time nondeterministic Turing machines to $\mathbb{N}_1$ in the above, which has the property that there is only one 
$$
i\in\mathbb{N}_1
$$
for each polynomial-time nondeterministic Turing machine $(M,k)$ such that 
$$
e(i) = (M, k).
$$
Or equivalently, we have shown that the set of all polynomial-time nondeterministic Turing machines is enumerable, and 
$$
e:\mathbb{N}_1\rightarrow\{(M,k)\}
$$
is an enumeration. To conclude, we have the following: 
\begin{theoremsection}
\label{theorem3.1}
All of the polynomial-time nondeterministic Turing machines are in the above enumeration $e$. In other words, the set $\{(M,k)\}$ of all polynomial-time nondeterministic Turing machines is enumerable. \Q.E.D
\end{theoremsection}

\vskip 0.3 cm
\begin{remark}
\label{remark2}
There is another way to {\em enumerate} all polynomial-time nondeterministic Turing machines without encoding the order of polynomial into their representation. To do so, we need the {\em Cantor pairing function} (see Fig. \ref{2} below from \cite{ANMOUS4}):
$$
\pi:\mathbb{N}\times\mathbb{N}\rightarrow\mathbb{N}
$$
defined by
$$
\pi(k_1,k_2):=\frac{1}{2}(k_1+k_2)(k_1+k_2+1)+k_2
$$
where $k_1,k_2\in\mathbb{N}$. Since the Cantor pairing function is invertible (see \cite{ANMOUS4}), it is a bijection between $\mathbb{N}\times\mathbb{N}$ and $\mathbb{N}$. As we have shown that any polynomial-time nondeterministic Turing machine is an integer, we can place any polynomial-time nondeterministic Turing machine and its order of polynomial into a tuple $(M,k)$ and use the Cantor pairing function to map the tuple $(M,k)$ to an integer in $\mathbb{N}_1$. The reader can easily check that such a method gives an enumeration of the set of polynomial-time nondeterministic Turing machines.

\vskip 0.3 cm   
\begin{figure}[htb]
     \center{\includegraphics[width=9.5cm]{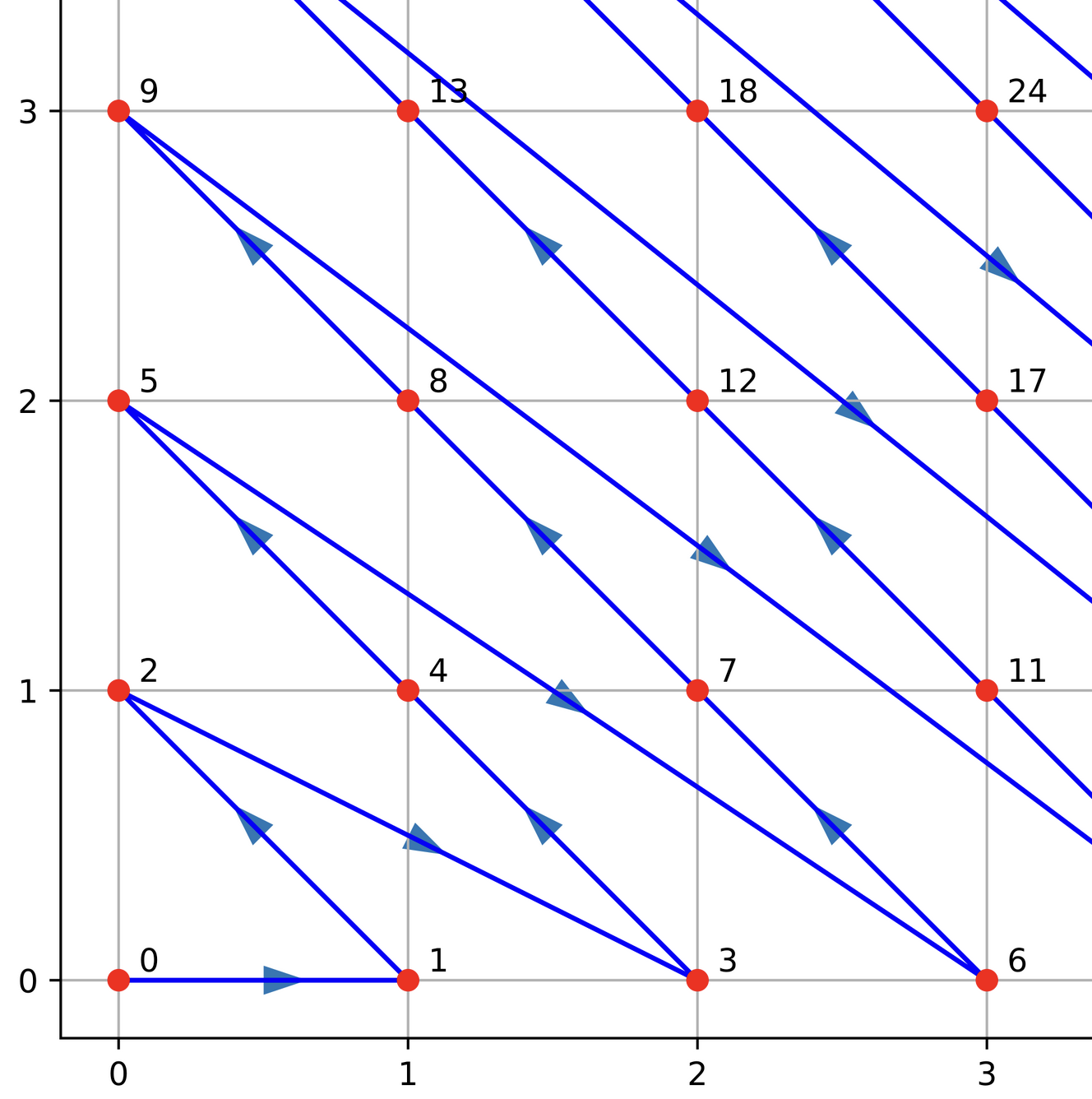}}
     \caption{\label{2}Cantor pairing function}
  \end{figure}
\end{remark}

\vskip 0.3cm
\section{$\mathcal{NP}$ Differs from $\mathcal{PSPACE}$}
\label{sec:proofofmainresult}
 \vskip 0.3cm
 
Note that we intend to demonstrate 
$$
\mathcal{NP}\subsetneqq \mathcal{PSPACE}.
$$

\vskip 0.5cm
\subsection{Proof of Theorem \ref{theorem1}}
\label{sec:proofoftheorem1}

We are now able to design a four-tape NTM $M_0$ that treats its input string $x$ both as an encoding of a tuple $(M_j,k)$ and also as the input to the polynomial-time nondeterministic Turing machine $M_j$. We stress that the deterministic simulation process in the following important theorem is adapting from the simulation process in \cite{DK14} (see proof of Theorem 1.9 in the standard textbook \cite{DK14}):

\begin{theoremsection}
\label{theorem3}
There exists a language $L_d$ accepted by a universal nondeterministic Turing machine $M_0$ but by no polynomial-time nondeterministic Turing machines.
\end{theoremsection}

\begin{proof}
Let $M_0$ be a four-tape NTM which operates as follows on an input string $x$ of length of $n$.
\begin{enumerate}

\item{
$M_0$ places its input $x$ on tape $1$ and the context is never changed during the computation.
}

\item{
By using 
$$
O(\log |x|)
$$
space,\footnote{ In this paper, $\log n$ stands for $\log_2n$.} $M_0$ decodes the tuple encoded by $x$. If $x$ is not the encoding of some non-trivial polynomial-time NTM $M_j$ for some $j$ then GOTO $7$, else determines $t$, the number of tape symbols used by $M_j$; $s$, its number of states; and $k$, its order of polynomial of 
$$
T(n)=n^k+k,
$$
meaning that $M_j$ is a nondeterministic $n^k+k$ time-bounded Turing machine and its time complexity is of order $k$. In particular, $M_0$ determines the positive integer $l$ such that 
$$
l=\max\{|\delta(q,a)|\,:\,\text{ for all $q\in Q$ and for all $a\in\Gamma$ of $M_j$}\}
$$
Note that $l$ is finite, i.e., $l<+\infty$. The third tape of $M_0$ can be used as ``scratch" memory to calculate $t$.}

\item{
$M_0$ marks off 
$$
|x|^{k+1}
$$
cells on each tape except tape $1$ (note that tape $1$ is the input tape of $M_0$). After doing so, if any tape head of $M_0$ attempts to move off the marked cells, $M_0$ halts without accepting.
}

\item{
Then $M_0$ lays off on its second tape 
$$
T(|x|)=|x|^k+k
$$
blocks of 
$$
\lceil\log t\rceil
$$
cells each, the blocks being separated by single cell holding a marker $\#$, i.e., there are 
$$
(1+\lceil\log t\rceil)T(|x|)
$$
cells in all. Each tape symbol occurring in a cell of $M_j$'s tape will be encoded as a binary number in the corresponding block of the second tape of $M_0$. Initially, $M_0$ places $M_j$'s input, in binary coded form, in the blocks of tape $2$, filling the unused blocks with the code for the blank.
}

\item{
Then $M_0$ lays off on its $3$th tape 
$$
T(|x|)=|x|^k+k
$$
blocks of 
$$
\lceil\log l\rceil
$$
cells each, i.e., there are 
$$
(\lceil\log l\rceil)T(|x|)
$$
cells in all. $M_0$ generates the $r$th string $w$ in $\{1,2,\cdots,l\}^{T(|x|)}$, in the lexicographic ordering, on tape $3$. (This string will overwrite the $(r-1)$th string generated in stage $r-1$.)
}

\item{
Next, $M_0$ on input $x$ simulates $M_j$ on input $x$ deterministically. Specifically, $M_0$ simulates $M_j$, using tape $1$, its input tape, to determine the moves of $M_j$ and using tape $2$ to simulate the tape of $M_j$. $M_0$ uses tape $3$ to store the current simulation information and tape $4$ is used to hold the state of $M_j$. More precisely, $M_0$ operates on input $x$ in stages. At stage $r>0$, $M_0$ performs the following actions:

\begin{enumerate}
  \item [(1)]{ $M_0$ erases anything in tape $2$ that may have been left over from stage $r-1$ and copies the input $x$ from tape $1$ to tape $2$, i.e., $M_0$ places $M_j$'s input, in binary coded form, in the blocks of tape $2$, filling the unused blocks with the code for the blank. Note that each tape symbol occurring in a cell of $M_j$'s tape will be encoded as a binary number in the corresponding block of the second tape of $M_0$.}
  \item [(2)]{$M_0$ generates the $r$th string $w_1w_2\cdots w_{T(|x|)}\in\{1,2,\cdots,l\}^{T(|x|)}$, in the lexicographic ordering, on tape $3$.}
  \item [(3)]{$M_0$ simulates $M_j$ on input $x$ on tape $2$ for at most $T(|x|)$ moves. At the $j$th move, $1\leq j\leq T(|x|)$, $M_0$ examines the $j$th symbol $w_j$ on the tape $3$ to determine which transition of the relation $\delta$ is to be simulated. More precisely, if the current state is $q$ and the symbol currently scanned on tape $2$ is $a$, and $\delta(q,a)$ contains at least $l_j$ values and $w_j\leq l_j$, then $M_0$ follows the $w_j$-th move of $\delta(q,a)$ (note that $w_j\in\{1,2,\cdots,l\}$ is a positive integer); if $\delta(q,a)$ has less than $w_j$ values, then $M_0$ goes to stage $r+1$.}
  \item [(4)]{If the simulation halts within $T(|x|)$ moves in an accepting state, then $M_0$ rejects the input $x$ and halts; Else if the simulation halts within $T(|x|)$ moves in an rejecting state
  and the context $w$ of tape $3$ is 
  $$\aligned
  w=&w_1w_2\cdots w_{T(|x|)}\\
   =&\underbrace{ll\cdots l}_{T(|x|)},
  \endaligned$$
   then $M_0$ accepts the input $x$ and halts; otherwise, it goes to stage $r+1$.
  }
\end{enumerate}
}\label{item4}

\item{
Since $x$ is not encoding of some non-trivial polynomial-time NTM, $M_0$ marks off 
$$
(1+\lceil \log 4\rceil+\lceil\log 2\rceil)|x|
$$
on each tape except tape $1$. After doing so, if any tape head of $M_0$ attempts to move off the marked cells, $M_0$ halts without accepting. Then, on tape $3$, $M_0$ sets up a block of 
$$
\lceil\log 4\rceil+\lceil\log |x|\rceil +\lceil\log 2\rceil |x|
$$ 
cells, initialized to all $0$'s. Tape $3$ is used as a counter to count up to 
$$
4\times |x|\times 2^{|x|}.
$$
By using its nondeterministic choices, $M_0$ moves as per the path given by $x$. The moves of $M_0$ are counted in binary in the block of tape $3$. $M_0$ rejects if the number of moves made by $M_0$ exceeds 
$$
4\times|x|\times 2^{|x|}
$$ 
or $M_0$ reaches a reject state before reaching 
$$
4\times|x|\times 2^{|x|},
$$
otherwise $M_0$ accepts. Note that the number of $4$ and $2$ in 
$$
4\times|x|\times 2^{|x|}
$$
is fixed, i.e., it is default.
}
\end{enumerate}

The NTM $M_0$ described above is of space complexity $C(n)$ which is currently unknown, and it of course accepts some language $L_d$.

Suppose now $L_d$ were accepted by some polynomial-time NTM $M_i$ in the enumeration $e$ whose time complexity is 
$$
T(n)=n^k+k.
$$
Then let $M_i$ have $s$ states and $t$ tape symbols. 

Further, let 
$$
m=\max\{\lceil\log l\rceil,1+\lceil \log t\rceil\}.
$$
Then,
$$
m\times T(|x|)=\max\{(\lceil\log l\rceil)\times T(|x|),(1+\lceil \log t\rceil)\times T(|x|)\}.
$$

\noindent Since

$$\aligned
\lim_{n\rightarrow\infty}&\frac{\max\{(\lceil\log l\rceil)\times T(n),(1+\lceil \log t\rceil)\times T(n)\}}{n^{k+1}}\\
       =&\lim_{n\rightarrow\infty}\frac{mT(n)}{n^{k+1}}\\
       =&\lim_{n\rightarrow\infty}\frac{m\times (n^k+k)}{n^{k+1}}\\
       =&\lim_{n\rightarrow\infty}\left(\frac{m\times n^k}{n^{k+1}}+\frac{m\times k}{n^{k+1}}\right)\\
       =&0\\
       <&1.
\endaligned$$

So, there exists a $N_0>0$ such that for any $N\geq N_0$,
$$
\max\{(\lceil\log l\rceil)\times T(N),(1+\lceil \log t\rceil)\times T(N)\}<N^{k+1},
$$
which implies that for a sufficiently long $x$, say $|x|\geq N_0$, and $M_x$ denoted by such $x$ is $\widehat{M}_i$, we have
$$
\max\{(\lceil\log l\rceil)\times T(|x|),(1+\lceil \log t\rceil)\times T(|x|)\}<|x|^{k+1}.
$$

Thus, on input $x$, $M_0$ has sufficient room to simulate $\widehat{M}_i$ and accepts if and only if $\widehat{M}_i$ rejects, a contradiction with our assumption that $M_i$ accepted $L_d$, i.e., $M_i$ agreed with $M_0$ on all inputs.

The above arguments further yield that there exists no polynomial-time nondeterministic Turing machine $M_j$ in the enumeration $e$ accepting the language $L_d$. Since all polynomial-time nondeterministic Turing machines are in the list of $e$, thus, 
$$
L_d\notin\mathcal{NP}.
$$
The proof is completed.
\end{proof}

\vskip 0.3 cm
\begin{remark}
The simulation techniques for a polynomial-time nondeterministic Turing machine in the proof of Theorem \ref{theorem3} are essentially the same as in the proof of Lemma \ref{lemma1}. We just present a concretization of the simulation process in the proof of Lemma \ref{lemma1}. In fact, Lemma \ref{lemma1} is a convenient prerequisite for proving Theorem \ref{theorem3}.
\end{remark}

Next, we will show that the universal nondeterministic Turing machine $M_0$ works in space $O(n^k)$ for any $k\in\mathbb{N}_1$:

\begin{theoremsection}
\label{theorem4}
The universal nondeterministic Turing machine $M_0$ constructed in proof of Theorem \ref{theorem3} runs within space $O(n^k)$ for any $k\in\mathbb{N}_1$.
\end{theoremsection}

\begin{proof}
The simplest way to show the theorem is to prove that for any input $w$ to $M_0$, there is a corresponding positive integer $i_w\in\mathbb{N}_1$ such that $M_0$ scans at most $|w|^{i_w+1}$ cells, which can be done as follows.
   
On the one hand, when the input $x$ encodes a polynomial-time nondeterministic Turing machine $M_j$, say a nondeterministic $T(n)$ time-bounded Turing machine, where
$$
T(n) = n^k+k,
$$
then $M_0$ scans at most 
$$
|x|^{k+1}
$$
cells by the construction. So, the corresponding integer is $k$ (i.e., $i=k$) in this case, and $M_0$ works in space $O(n^k)$. This holds for any nondeterministic $n^k+k$ time-bounded Turing machine in the enumeration $e$ where $k\in\mathbb{N}_1$.

On the other hand, when the input $x$ encodes a trivial Turing machine, then there is a constant $c>0$, such that $M_0$ scans no cells beyond 
$$
c|x|
$$
for the input $x$. So the corresponding integer is $1$ (i.e., $i_x=1$, $M_0$ scans at most $n^2$ cells) in this case, and $M_0$ works in space $O(n)$. In both cases, $M_0$ runs at most in space:
$$\aligned
C(n) = & \max\{n^k, n\}\\
       = & n^k
\endaligned$$

\noindent for any $k\in\mathbb{N}_1$.
\end{proof}

 \vskip 0.3 cm
\begin{remark}
 \label{remark3}
Some may argue that $M_0$ runs within space $O(n^k)$ for any $k\in\mathbb{N}_1$, thus it is not obvious that $L_d\in\mathcal{PSPACE}$. We will show first that $L_d\in\mathcal{NSPACE}$ rigorously.
\end{remark}
\vskip 0.3 cm

\begin{theoremsection}
\label{theorem5}
The language $L_d$ accepted by $M_0$ is in $\mathcal{NSPACE}$.
\end{theoremsection}
\begin{proof}
We first define the family of languages 
$$
\left\{L_d^i\right\}_{i\in\mathbb{N}_1}
$$
as follows:\footnote{ This can be done as follows: we insert item $3'$ after item 2 and before item 3 in the proof of Theorem \ref{theorem3}. Namely, 
\begin{enumerate}
  \item [$3'$.]{$M_0$ marks off $|x|^{i+1}$ cells on each tape except tape $1$. After doing so, if any tape head of $M_0$ attempts to move off the marked cells, $M_0$ halts without accepting.}
\end{enumerate}
Then, in item 3, $M_0$ marks off $|x|^{k+1}$ cells within the $|x|^{i+1}$ cells previously marked. If $k+1>i+1$, $M_0$ will fail to mark off $|x|^{k+1}$ cells, thus rejecting the input. The above revision, in fact, limits the total cells for $M_0$ to work to be at most $n^{i+1}$. 

Equivalently, first, let $M_0$ compare the order $k$ of the nondeterministic $n^k+k$ time-bounded Turing machine $M_j$ decoded from the input $x$ and the value $i$ (a fixed constant); if $k+1>i+1$
then rejects the input $x$ (since in this case, when simulating the nondeterministic $n^k+k$ time-bounded Turing machine $M_j$, $M_0$ will scan the cells beyond $n^{i+1}$), else $M_0$ marks off $|x|^{k+1}$ cells on each work tape (in this case, when simulating the nondeterministic $n^k+k$ time-bounded Turing machine $M_j$, $M_0$ scans the cells within $n^{i+1}$), which is slightly different from \cite{Lin21b}.}
\begin{align*}
L_d^i\overset{\text{def}}{=}\,\,&\text{language accepted by $M_0$ scanning $O(n^i)$ cells for fixed $i\in\mathbb{N}_1$ (i.e., at }\\
&\text{most $n^{i+1}$ cells for fixed $i\in\mathbb{N}_1$). That is, $M_0$ turns itself off mandatorily}\\
&\text{when its work-tape heads attempt to move off the marked $n^{i+1}$ cells }\\
&\text{during the computation.}
\end{align*}

Then by construction and by Theorem \ref{theorem4}, for each input $w$ to $M_0$, there is a corresponding integer $i_w$ such that $M_0$ scans at most $|w|^{i_w+1}$ cells (i.e., $M_0$ scans at most $|w|^{i_w+1}$ cells for each work tape, or runs within space $O(n^i)$ for any $i\in\mathbb{N}_1$), we thus have that
$$
 L_d=\bigcup_{i\in\mathbb{N}_1}L_d^i.\eqno(3.2.1)
$$

Moreover,
$$
L_d^i\subseteq L_d^{i+1},\quad\text{for each fixed $i\in\mathbb{N}_1$}
$$
since for any word $w\in L_d^i$ accepted by $M_0$ within space $O(n^i)$, it surely can be accepted by $M_0$ within space $O(n^{i+1})$, i.e.,
$$
w\in L_d^{i+1}.
$$
This gives that for any $i\in\mathbb{N}_1$,
$$
         L_d^1\subseteq L_d^2\subseteq\cdots\subseteq L_d^i\subseteq L_d^{i+1}\subseteq\cdots\eqno(3.2.2)
$$

Note further that $L_d^i$ is accepted by a multi-tape nondeterministic Turing machine $M_0$ within space $O(n^i)$, by Lemma \ref{lemma4}, we thus obtain that $L_d^i$ is accepted by a single-tape nondeterministic Turing machine $M_0'$ within space 
$$
O(n^i).
$$
Hence
$$
L_d^i\in\text{NSPACE}[n^i]\subseteq\mathcal{NSPACE},\quad\text{for any fixed $i\in\mathbb{N}_1$.}\eqno(3.2.3)
$$

It is clear that (3.2.1), together with (3.2.2) and (3.2.3), yields
$$
L_d \in\mathcal{NSPACE}.
$$
\end{proof}
   
\vskip 0.3 cm
\begin{remark}
In fact, after obtaining the relations {\em (3.2.1)} and {\em (3.2.2)}, we can suppose that 
$$
L_d\not\in\mathcal{NPSPACE}.
$$
Then there must exist at least an $i\in\mathbb{N}_1$ such that 
$$
L_d^i\not\in\mathcal{NPSPACE}.
$$
But by definition, $L_d^i$ is the language accepted by nondeterministic Turing machine $M_0$ scanning at most $n^{i+1}$ cells (i.e., working within $O(n^i)$ space), which is clearly a contradiction. Hence, such an $i$ can not be found, i.e., 
$$
L_d^i\in\mathcal{NSPACE}
$$
for all $i\in\mathbb{N}_1$. Equivalently, 
$$
L_d\in\mathcal{NSPACE}.
$$
\end{remark}
\vskip 0.3cm
   
Now we are at the point to give the proof of Theorem \ref{theorem1} naturally:

\vskip 0.3cm

\noindent{\it Proof of Theorem \ref{theorem1}}. First note that by Savitch's Theorem \cite{Sav70}, we have 
$$
\mathcal{PSPACE}=\mathcal{NSPACE}.
$$
Then it is obvious that Theorem \ref{theorem1} is an immediate consequence of Theorem \ref{theorem3} and Theorem \ref{theorem5}. \Q.E.D

\vskip 0.3 cm
\begin{remark}
Note that we have shown in Theorem \ref{theorem5} that $L_d\in\mathcal{NSPACE}$ mathematically, but some readers are with such a question: can we find a fixed constant 
$$
c\in\mathbb{N}_1
$$
such that the nondeterministic Turing machine $M_0$ runs within polynomial space
$$
n^c+c\,?
$$
The answer depends on whether we can answer the following question: Let 
$$
NPTMs=\{T_1,T_2,\cdots\}
$$
be the set of all polynomial-time nondeterministic Turing machines, and let 
$$
{\rm order}(T_i)
$$
be the order of polynomial of machine $T_i$. For example, if $T_i$ is a nondeterministic $n^z+z$ time-bounded Turing machine, then $$
{\rm order}(T_i)=z.
$$
Let 
$$
m=\max\left\{{\rm order}(T_1),{\rm order}(T_2),\cdots\right\}
$$
Then we can say 
$$
n^{m+1}+(m+1)
$$
is the polynomial of $M_0$. But can we find such a fixed constant $c$ in $\mathbb{N}_1$ so that 
$$
c=m\,?
$$
\end{remark}
\vskip 0.3cm
  
\section{Conclusions and Open Problems}
\label{sec:conclusions}

In conclusion, we have shown that 
$$
\mathcal{NP}\subsetneqq\mathcal{PSPACE}.
$$

The essential techniques used are basically the same as those in the author's \cite{Lin21b}. In other words, we enumerate all polynomial-time nondeterministic Turing machines, then diagonalize against all of the polynomial-time nondeterministic Turing machines in the enumeration $e$. The language accepted by $M_0$, which differs from each polynomial-time nondeterministic Turing machine in the enumeration $e$ and belongs to $\mathcal{NSPACE}$, yields our main result. The last step is by Savitch's Theorem \cite{Sav70}. 

The main practical implication is that $\mathcal{PSPACE}$-complete problems are indeed harder than $\mathcal{NP}$-complete problems.

Obviously, we did not touch on the question of whether 
$$
NL\overset{?}{=}\mathcal{P}
$$
in this paper. Further, there are many other interesting open questions for future study; see e.g., \cite{ANMOUS1}. 

Although, it was showed in \cite{BBBV97} that relative to an oracle chosen uniformly at random with probability $1$ the class $\mathcal{NP}$ can not be solved on a quantum Turing machine in time $o(2^{\frac{n}{2}})$, this does not necessarily imply $\mathcal{NP}\not\subset\mathcal{BQP}$ because the oracle result is not a necessary and sufficient condition for $\mathcal{NP}\not\subset\mathcal{BQP}$, which means that the exact relationship between $\mathcal{BQP}$ and $\mathcal{NP}$ is unknown (i.e., we can not deduce that $\mathcal{NP}$-complete problems are outside the class $\mathcal{BQP}$, although it is a general belief that the $\mathcal{NP}$-complete problems are outside the class $\mathcal{BQP}$). This is, henceforth, a very fundamental and important open question—that will infect the security of the McEliece cryptosystem \cite{Mce78}—based on the fact that decoding an arbitrary linear code is $\mathcal{NP}$-complete \cite{BMT78}.

Most importantly, the practical significance of our result in this paper is that, even if 
$$
\mathcal{NP}\subseteq \mathcal{BQP},
$$
where $\mathcal{BQP}$ is the complexity class of {\em bounded-error quantum polynomial time} defined in \cite{BV97}, we still can design a cryptographic protocol that is even quantum computer resistant, based on the assumption that 
$$
\mathcal{PSPACE}\text{-complete problem}\not\in\mathcal{BQP}.
$$

Seriously, we conjecture that 
$$
\mathcal{BQP}\subsetneqq \mathcal{PSPACE}.
$$

\vskip 0.3cm
\begin{flushleft}
\medskip\noindent
\begin{tabbing}
XXXXXXXXXXXXXXXXXXXXXXXXXX*\=\kill
Tianrong Lin\\
National Hakka University, China\\

\end{tabbing}

\end{flushleft}
\end {document}